\newcommand{\eq}[1]{(\ref{eq:#1})}
\newcommand{\thm}[1]{\hyperref[thm:#1]{Theorem~\ref*{thm:#1}}}
\newcommand{\defn}[1]{\hyperref[defn:#1]{Definition~\ref*{defn:#1}}}
\newcommand{\lem}[1]{\hyperref[lem:#1]{Lemma~\ref*{lem:#1}}}
\newcommand{\prop}[1]{\hyperref[prop:#1]{Proposition~\ref*{prop:#1}}}
\newcommand{\fig}[1]{\hyperref[fig:#1]{Figure~\ref*{fig:#1}}}
\newcommand{\tab}[1]{\hyperref[tab:#1]{Table~\ref*{tab:#1}}}
\renewcommand{\sec}[1]{\hyperref[sec:#1]{Section~\ref*{sec:#1}}}
\newcommand{\app}[1]{\hyperref[app:#1]{Appendix~\ref*{app:#1}}}
\newcommand{\cor}[1]{\hyperref[cor:#1]{Corollary~\ref*{cor:#1}}}
\newcommand{\obs}[1]{\hyperref[obs:#1]{Observation~\ref*{obs:#1}}}
\newcommand{\nn}{\nonumber \\}
\newtheorem{theorem}{Theorem}
\newtheorem{lemma}[theorem]{Lemma}
\renewcommand{\ket}[1]{|#1\rangle}
\newcommand{\MQ}{\affiliation{%
School of Mathematical and Physical Sciences,
Macquarie University, Sydney, NSW 2109, Australia} }
\begin{document}

\title{Optimum phase estimation with two control qubits}
\author{Peyman Najafi}
\affiliation{%
Donders Institute for Neuroscience, Radboud University, Nijmegen, the Netherlands}
\MQ
\author{Pedro C. S. Costa}\MQ
\author{Dominic W. Berry}
 	\email{Electronic mail: dominic.berry@mq.edu.au}\MQ

\begin{abstract}
Phase estimation is used in many quantum algorithms, particularly in order to estimate energy eigenvalues for quantum systems.
When using a single qubit as the probe (used to control the unitary we wish to estimate the eigenvalue of), it is not possible to measure the phase with a minimum mean-square error.
In standard methods, there would be a logarithmic (in error) number of control qubits needed in order to achieve this minimum error.
Here show how to perform this measurement using only two control qubits, thereby reducing the qubit requirements of the quantum algorithm.
Our method corresponds to preparing the optimal control state one qubit at a time, while it is simultaneously consumed by the measurement procedure.
\end{abstract}
\maketitle

\section{Introduction}

Quantum phase estimation was originally applied in quantum algorithms for the task of period finding, as in Shor's algorithm \cite{Shor}.
Later, quantum phase estimation was applied to the task of estimating eigenvalues for Hamiltonians in quantum chemistry \cite{Alan}.
The appropriate way to perform quantum phase estimation is different between these applications, due to the costing of the operations.
In particular, for estimating eigenvalues, the cost of Hamiltonian simulation is (at least) proportional to the time of evolution, so the phase estimation procedure should attempt to minimise the total evolution time.
At the same time the mean-square error in the estimate should be minimised.

As part of the phase estimation, the inverse quantum Fourier transform is used.
This operation can be decomposed into a `semiclassical' form \cite{griffiths_1996_semiclassical}, where one performs measurements on the control qubits in sequence, with rotations controlled according to the results of previous measurements.
In the form of phase estimation as in Shor's algorithm, the control qubits would be an equal superposition state, which is just a tensor product of $\ket{+}$ states on the individual qubits.
In that scenario, only one control qubit need be used at a time, because it can be prepared in the $\ket{+}$ state, used as a control, then rotated and measured before the next qubit is used.

This procedure with the control qubits in $\ket{+}$ states gives a probability distribution for the error as a sinc function, which has a significant probability for large errors.
That is still suitable for Shor's algorithm, because it is possible to take large powers of the operators with relatively small cost, which suppresses the phase measurement error.
On the other hand, for quantum chemistry where there is a cost of Hamiltonian simulation proportional to time, the large error of the sinc is a problem.
Then it is more appropriate to use qubits in an entangled state \cite{BabbushPRX18}, which was originally derived in an optical context in 1996 \cite{luis1996optimum}.

In 2000 we analysed the problem of how to perform measurements on these states in a Mach-Zehnder interferometer \cite{BerryPRL00}.
The same year, Jon Dowling introduced NOON states in the context of lithography \cite{NOON1}, and then in 2002 showed how NOON states may be used in interferometry for phase measurement \cite{NOON2,NOON3}.
A drawback to using NOON states is that they are highly sensitive to loss.
In 2010 one of us (DWB) visited Jon Dowling's group to work on the problem of how to generate states that are more resistant to loss and effectively perform measurements with them.
This resulted in the publication (separately from Jon) \cite{DinaniPRA14}, followed by our first joint publication \cite{DinaniPRA16}.
We continued collaborating with Jon for many years on phase measurement \cite{HuangPRA17}, as well as state preparation \cite{MotesPRA16}, and Boson-sampling inspired cryptography \cite{Huang2021photonicquantumdata}.

In separate work, we showed how to combine results from multiple NOON states in order to provide highly accurate phase measurements suitable for quantum algorithms \cite{higgins2007entanglement}.
Phase measurement via NOON states is analogous to taking a $\ket{+}$ state and performing a controlled $U^N$ on a target system in quantum computing.
The photons in the arms of the interferometer are analogous to the control qubit in quantum computing, with the phase shift from $U^N$ instead arising from an optical phase shift between the arms of the interferometer.
The NOON state gives very high frequency variation of the probability distribution for the phase, rather than a probability distribution with a single peak.
In 2007 we showed how to combine the results from NOON states with different values of $N$ in order to provide a phase measurement analogous to the procedure giving a sinc distribution in quantum algorithms \cite{higgins2007entanglement}.
(It was experimentally demonstrated with multiple passes through the phase shift rather than NOON states.)

A further advance in \cite{higgins2007entanglement} was to show how to use an adaptive procedure, still with individual $\ket{+}$ states, in order to give the `Heisenberg limited' phase estimate.
That is, rather than the mean-square error scaling as it does for the sinc, it scales as it does for the optimal (entangled) control state.
This procedure still only uses a single control qubit at a time, so is suitable for using in quantum algorithms where the number of qubits available is strongly limited; this is why it was used, for example, in \cite{Kivlichan2020improvedfault}.
On the other hand, although it gives the optimal scaling, the constant factor is not optimal, and improved performance is provided by using the optimal entangled state.

In this paper we show how to achieve the best of both worlds.
That is, we show how to provide the optimal phase estimate (with the correct constant factor), while only increasing the number of control qubits by one.
It is therefore suitable for quantum algorithms with a small number of qubits, while enabling the minimum complexity for a given required accuracy.

In \cref{sec:background} we discuss the optimal state for phase estimation and how its usage can be combined with the semiclassical quantum Fourier transform.
Then in \cref{sec:rec_meth}, we introduce a orthogonal basis of states for subsets of qubits, and prove a recursive form. Finally, in \cref{sec:twoQ} we show how the recursive form can be translated into a sequence of two-qubit unitaries to create the optimal state.

\section{Phase measurement using optimal quantum states}
\label{sec:background}

\subsection{The optimal states}
The optimal states for phase estimation from  \cite{luis1996optimum} are of the form
\begin{equation}\label{eq:optstate}
    \ket{\psi_{\rm opt}} = \sqrt{\frac{2}{N+2}} \sum_{n=0}^{N}\sin\left(\frac{\pi (n+1)}{N +2}\right)\ket{n},
\end{equation}
where $N$ is the total photon number in two modes, and $n$ is the photon number in one of the modes, as for example in a Mach-Zehnder interferometer.
It is also possible to consider the single-mode case where $N$ is a maximum photon number and $\ket{n}$ a Fock state.

In either case a physical phase shift of $\phi$ results in a state of the form
\begin{equation}\label{eq:encpha}
    \ket{\psi_{\rm opt}} = \sqrt{\frac{2}{N+2}} \sum_{n=0}^{N} e^{in\phi} \sin\left(\frac{\pi (n+1)}{N +2}\right)\ket{n}.
\end{equation}
The ideal `canonical' phase measurement is then a positive operator-valued measure (POVM) with elements \cite{canonical}
\begin{equation}
   \frac{N+1}{2\pi} |\check\phi\rangle\langle\check\phi| \, d\phi,
\end{equation}
where
\begin{equation}
    |\check\phi\rangle = \frac 1{\sqrt{N+1}} \sum_{n=0}^{N} e^{in\check\phi} \ket{n}.
\end{equation}
Here we are using $\check\phi$ for the result of the measurement, as distinct from the actual phase $\phi$.
Such a canonical measurement typically cannot be implemented using standard linear optical elements, though it can be approximated with adaptive measurements \cite{BerryPRL00}.

It is easily seen that the error distribution after the measurement is then
\begin{equation}
  \frac{1}{\pi(N+2)}  \left(\frac{\cos((\check\phi-\phi)(1 + N /2)) \sin(\pi/(2 + N))}{
   \cos(\pi/(2 + N)) - \cos(\check\phi-\phi)}\right)^2 .
\end{equation}
In contrast, if one were to use the state with an equal distribution over basis states, then the error probability distribution would be close to a sinc
\begin{equation}
    \frac{1}{2\pi(N+1)} \frac{\sin^2((N+1)(\check\phi-\phi)/2)}{\sin^2((\check\phi-\phi)/2)} .
\end{equation}
The error distributions for these two states are shown in \fig{optplots}.
The central peak for the equal superposition state is a little narrower, but it has large tails in the distribution, whereas the probabilities of large errors for the optimal state are strongly suppressed.

\begin{figure}[tbh]
\centering
\includegraphics[scale=.26]{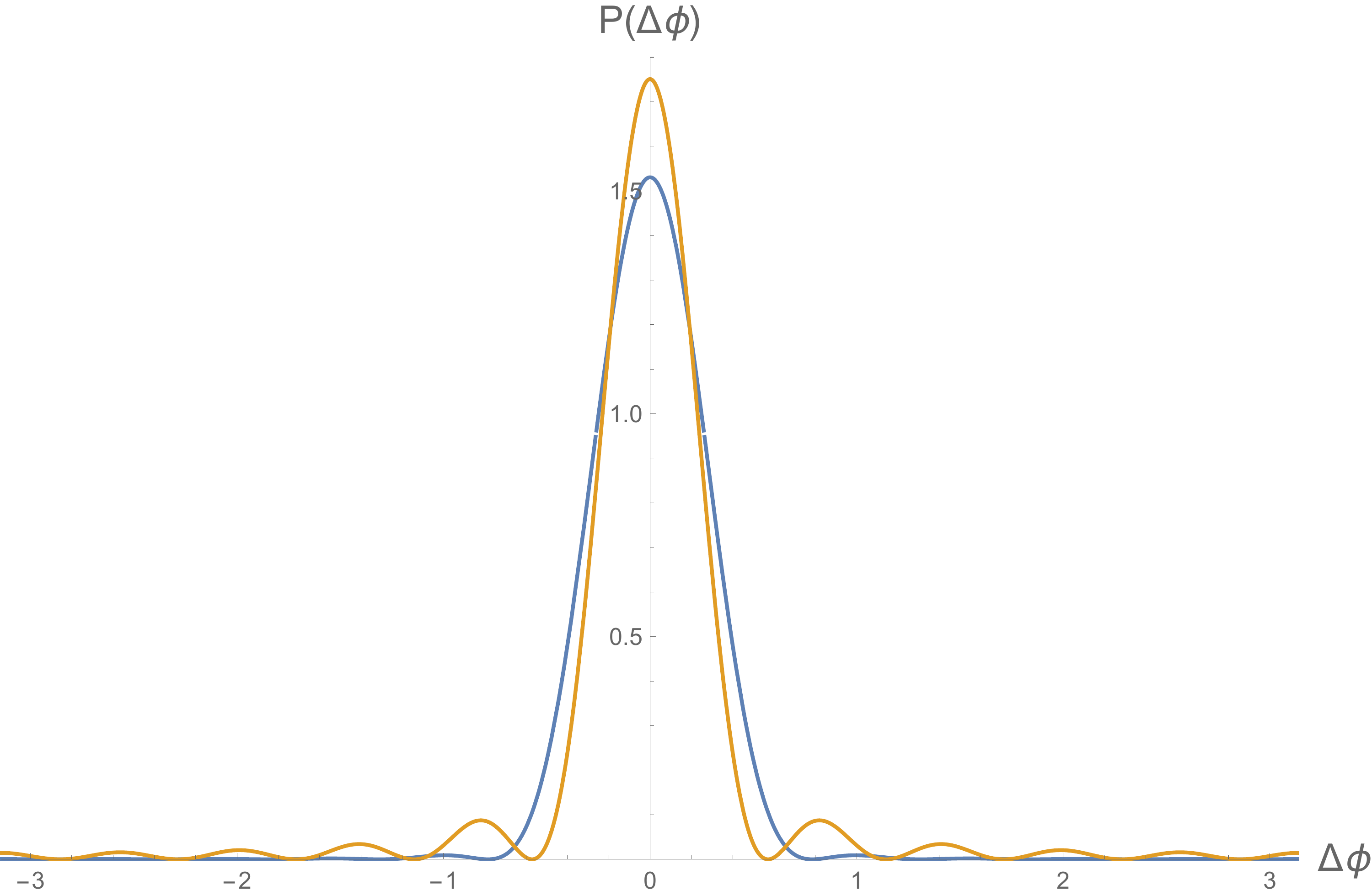} \qquad
\includegraphics[scale=.26]{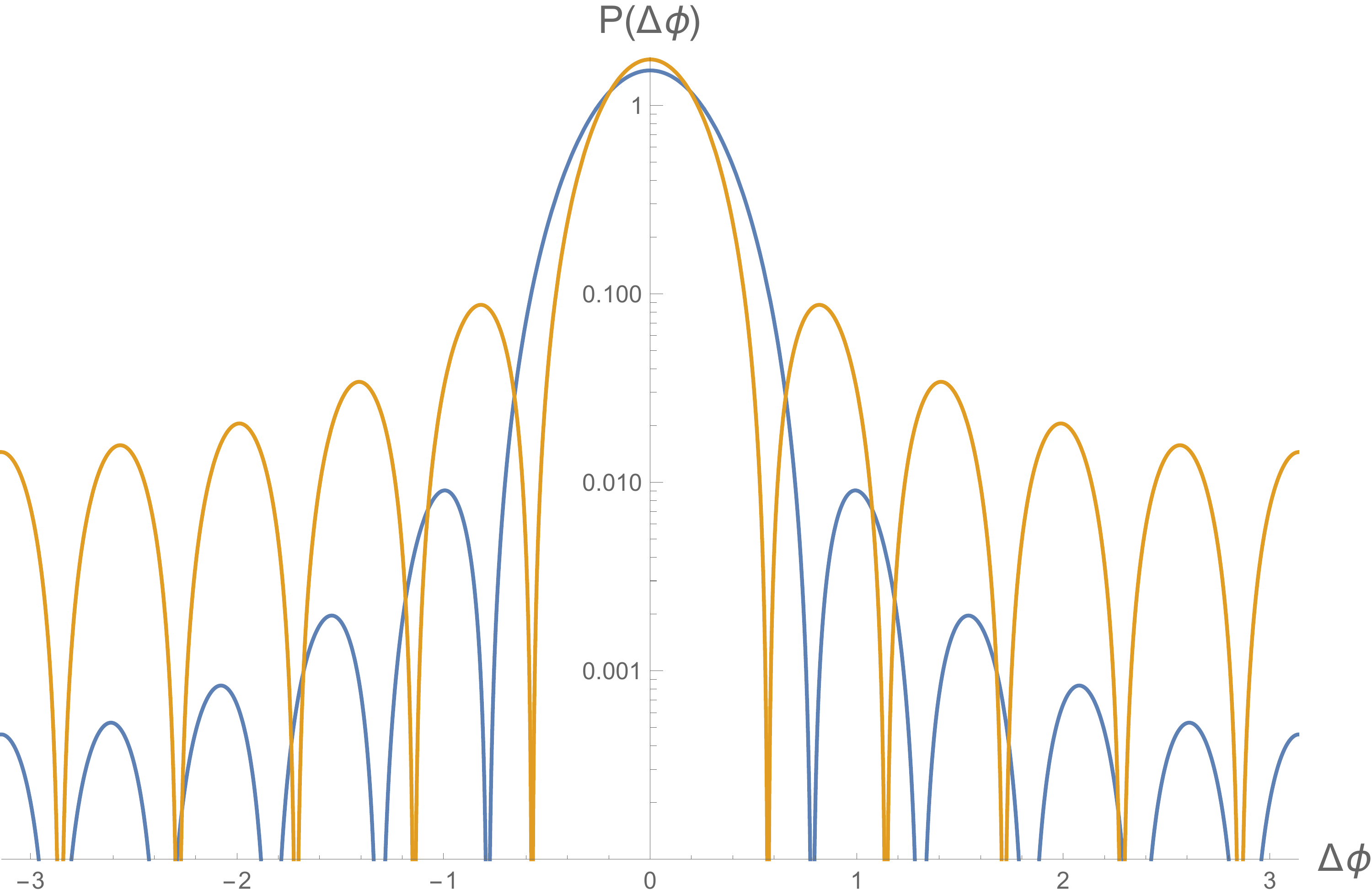}
\caption{\label{fig:optplots} The probability distribution for the error in phase measurements with $N=10$ and the optimal state \eq{optstate} (blue) and the equal superposition state (orange).
The left shows the linear scale and the right shows a log plot.}
\end{figure}

The optimal state \eq{optstate} is optimal for minimising a slightly different measure of error than usual.
The Holevo variance for a distribution can be taken as \cite{Holevo}
\begin{equation}
    |\langle e^{i\check\phi} \rangle |^{-2} -1 .
\end{equation}
This measure has the advantages that it is naturally modulo $2\pi$, as is suitable for phase, and approaches infinity for a flat distribution (with no phase information).
Moreover it approaches the usual variance for suitably narrowly peaked distributions.
To eliminate biased estimate, one can alternatively use the measure
\begin{equation}
    \langle \cos(\check\phi-\phi) \rangle ^{-2} -1 .
\end{equation}
This measure is analogous to the mean-square error.
One could also take the measure, as in \cite{luis1996optimum},
\begin{equation}
    2[1-\langle \cos(\check\phi-\phi) \rangle ] ,
\end{equation}
and the optimisation problem is equivalent.
The optimal state \eq{optstate} gives a minimam Holevo variance of 
\begin{equation}
    \tan^2\left( \frac {\pi}{N+2}\right).
\end{equation}
It is also possible to consider minimisation of the mean-square error, but there is not a simple analytic solution \cite{BerryPRA12}.

\subsection{Phase measurement with the inverse Fourier transform}
In the case of phase measurements in quantum computing, $\phi$ would instead be obtained from a unitary operator $U$ with eigenvalue $e^{i\phi}$.
If the target system is in the corresponding eigenstate of $U$, denoted $\ket{\phi}$, then if state $\ket{n}$ is used to control application of $U^n$, then the $\phi$-dependent state from Eq.~\eq{encpha} is again obtained.
In practice, the integer $n$ is represented in binary in ancilla qubits.
Then the most-significant bit, $n_1$, is used to control $U$, the next most significant bit, $n_2$, is used to control $U^2$, and so forth.
In general,
\begin{equation}
    n = \sum_{j=1}^m n_j 2^{j-1},
\end{equation}
$\ket{n_j}$ is used to control $U^{2^{j-1}}$.
This procedure is depicted in \fig{multPhase}.

\begin{figure}[tbh]
\centerline{
\Qcircuit @R=2em @C=1em {
&\lstick{\ket{n_{1}}}& \qw & \qw & \qw & \cdots && \qw & \ctrl{4} &\qw \\
&\lstick{\ket{n_{2}}}& \qw & \qw & \qw & \cdots && \ctrl{3} &\qw &\qw \\
 \vdots &  & &  &  & & & & &   \\
&\lstick{\ket{n_m}}& \qw &\ctrl{1} & \qw & \cdots && \qw &\qw  &\qw\\
&\lstick{\ket{\phi}} &  \qw & \gate{U^{2^{m-1}}}  & \qw &\cdots& & \gate{U^2} &\gate{U^1}&\qw
}}
\caption{\label{fig:multPhase} The circuit for a controlled power of $U$, where $\ket{n}$ on the control qubits gives $U^n$ on the target register. With the target prepared in an eigenstate $\ket{\phi}$ of $U$, the phase shift $e^{in\phi}$ is obtained.}
\end{figure}
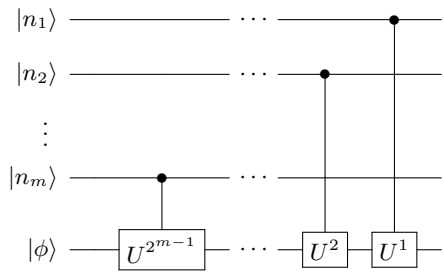

Here we have taken $m$ to be the number of bits.
In practice, it is convenient to take $N-1$ to be a power of 2, so $N=2^m-1$.
In order to estimate the phase, one wishes to perform the canonical measurement on the ancilla qubits.
To explain this, it is convenient to consider the POVM with $N+1$ states $|\check\phi_j\rangle$ with $\check\phi_j=2\pi j/(N+1)$ for $j=0$ to $N$.
Then the states $|\check\phi\rangle$ are mutually orthogonal.
Such a projective measurement can then be obtained if one can perform the unitary operation
\begin{equation}
    \sum_{j=0}^N \ket{j}\langle\check\phi_j| .
\end{equation}
That is, it maps the state $|\check\phi_j\rangle$ to a computational basis state $\ket{j}$, so a measurement in the computational basis gives the result for the phase.
This operation is the inverse of the usual quantum Fourier transform, which would map from $\ket{j}$ to $|\check\phi_j\rangle$.

If one aims to obtain the original POVM, one can randomly (with equal probability) select $\delta\phi\in[0,2\pi/(N+1)]$, and choose the states with $\check\phi_j=2\pi j/(N+1)+\delta\phi$.
Then perform a measurement in the basis $|\check\phi_j\rangle$ with this randomly chosen offset.
The complete measurement, including the random choice of $\delta\phi$, is then equivalent to the POVM with the set of outcomes over a continuous range of $\check\phi$.
This approach can be used in order to give a measurement that is covariant (has an error distribution independent of the system phase $\phi$).
In practice it is not usually needed, so we will not consider it further in this paper.

In order to obtain the estimate for the phase, one should therefore perform the inverse quantum Fourier transform on the control qubits.
The inverse quantum Fourier transform can be performed in a semiclassical way, by performing measurements on successive qubits followed by controlled rotations \cite{griffiths_1996_semiclassical}.
The usual terminology is the `semiclassical Fourier transform', though this is the inverse transform.
An example with three qubits is given in \fig{SCFT}.
The bottom (least significant qubit) is measured first.
The result is used to control a phase rotation on the middle qubit.
Then the middle qubit is measured, and the results of both measurements are used to control phase rotations on the top qubit.
The net result is the same as performing the inverse quantum Fourier transform and measuring in the computational basis.

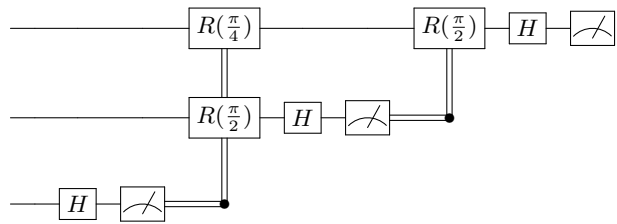
\begin{figure}[tbh]
\centerline{
\Qcircuit @R=2em @C=1em {
& \qw &\qw&\qw & \gate{R(\frac{\pi}{4})} & \qw& \qw &\gate{R(\frac{\pi}{2})} &\gate{H} &\meter\\
 & \qw & \qw& \qw &  \gate{R(\frac{\pi}{2})} \cwx[-1] &\gate{H} &\meter & \control \cw \cwx[-1] \\
 & \qw &\gate{H} & \meter &\control \cw \cwx[-1]  
}}
\caption{\label{fig:SCFT} An example of the semiclassical Fourier transform on three qubits, where the bottom qubit corresponds to the least significant bit.}
\end{figure}

A further advantage of this procedure is that the fact that the controlled $U$ operations are also performed in sequence means that the sequences can be matched.
That is, we have the combined procedure as shown in \fig{USCFT}.
In the case where control registers are prepared in an equal superposition state, then they are unentangled.
This means that preparation of each successive qubit can be delayed until it is needed, as shown in \fig{delayed}.

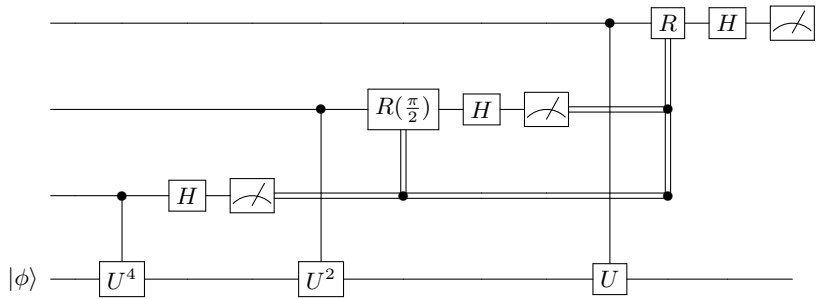
\begin{figure}[tbh]
\centerline{
\Qcircuit @R=2em @C=1em {
& \qw & \qw &\qw&\qw & \qw & \qw & \qw& \qw & \ctrl{3} & \gate{R} &\gate{H} &\meter\\
 & \qw & \qw & \qw & \qw & \ctrl{2} &  \gate{R(\frac{\pi}{2})} &\gate{H} &\meter & \cw & \control \cw \cwx[-1] \\
 & \qw & \ctrl{1} &\gate{H} & \meter & \cw & \control \cw \cwx[-1] & \cw & \cw & \cw & \control \cw \cwx[-1] &    \\
\lstick{\ket{\phi}} &  \qw  & \gate{U^4} & \qw & \qw   & \gate{U^2} & \qw & \qw & \qw &\gate{U}&\qw & \qw & \qw 
}}
\caption{\label{fig:USCFT} The combined procedure with controlled $U^{2^j}$ operations to give phase kickback, together with the semiclassical Fourier transform. The final controlled phase rotation is just shown as $R$, because the angle of rotation is controlled by the combined results of the first two measurements.}
\end{figure}

\begin{figure}[tbh]
\centerline{
\Qcircuit @R=2em @C=1em {
& & & & & & & \lstick{\ket{0}} & \gate{H} & \ctrl{3} & \gate{R} &\gate{H} &\meter\\
 & & & \lstick{\ket{0}} & \gate{H} & \ctrl{2} &  \gate{R(\frac{\pi}{2})} &\gate{H} &\meter & \cw & \control \cw \cwx[-1] \\
\lstick{\ket{0}} & \gate{H} & \ctrl{1} &\gate{H} & \meter & \cw & \control \cw \cwx[-1] & \cw & \cw & \cw & \control \cw \cwx[-1] &    \\
\lstick{\ket{\phi}} &  \qw  & \gate{U^4} & \qw & \qw   & \gate{U^2} & \qw & \qw & \qw &\gate{U}&\qw & \qw & \qw 
}}
\caption{\label{fig:delayed} The combined procedure with preparation of control ancillas delayed until they are needed.}
\end{figure}
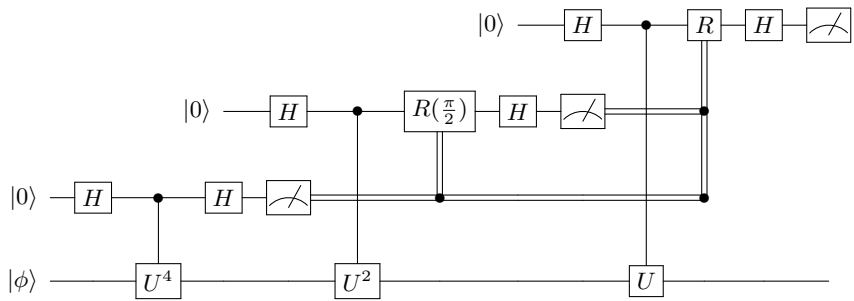

What this means is that only one control qubit need be used at once.
The preparation of the next control qubit can be delayed until after measurement of the previous one, and that qubit can be reset and reused.
That is useful in quantum algorithms with a limited number of qubits available, and is also useful in quantum phase estimation.
In that case, one can replace the control qubits with NOON states with photon numbers that are powers of 2.
Then these NOON states can be measured in sequence to give a canonical measurement of phase, even though a canonical measurement of phase would not be possible on a single two-mode state.
In \cite{higgins2007entanglement} we demonstrated this, using multiple passes through a phase shift rather than NOON states.

The drawback now is that, even though it is possible to perform the canonical measurement, a suboptimal state is being used.
We would like to be able to perform measurements achieving that minimum Holevo phase variance.
In \cite{higgins2007entanglement} we showed that, by using multiple NOON states of each number it is possible to obtain the desired scaling with total photon number, even though there is a different constant factor so the true minimum error is not achieved.

\subsection{Performing phase measurement with two control qubits}

Up until this point this section has been revision of prior work.
What is new here is that we show how to prepare the optimal state for phase measurement in a sequential way, so the number of qubits that need be used at once is minimised.
We will show how the optimal state can be prepared using a sequence of two-qubit operations, as in \fig{twoseq}.

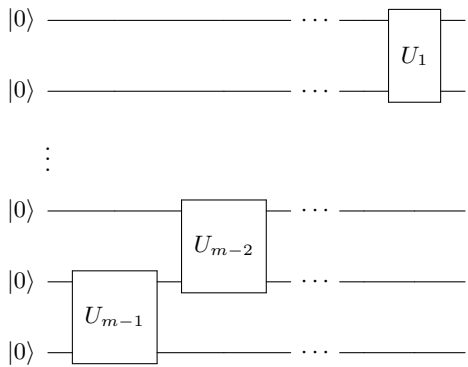
\begin{figure}[tbh]
\centerline{
\Qcircuit @R=2em @C=1em {\lstick{\ket{0}} & \qw & \qw & \qw &\cdots && \qw & \multigate{1}{U_{1}} & \qw \\
\lstick{\ket{0}} &  \qw &\qw & \qw&\cdots && \qw &\ghost{U_{1}} &\qw \\
 \vdots &  & &  &  & & & \\
\lstick{\ket{0}} & \qw & \multigate{1}{U_{m-2}} & \qw &\cdots & & \qw & \qw & \qw   \\
\lstick{\ket{0}} & \multigate{1}{U_{m-1}}  & \ghost{U_{m-2}} & \qw &\cdots & & \qw & \qw & \qw \\
\lstick{\ket{0}} & \ghost{U_{m-1}} & \qw &  \qw &\cdots & & \qw & \qw & \qw 
}}
\caption{\label{fig:twoseq} Preparation of the optimal state with a sequence of two-qubit operations.}
\end{figure}

When the optimal state is prepared in this way, its preparation may be delayed until the qubits are needed, as shown in \fig{twoseqcomb}.
This is illustrated with three control qubits, where introduction of the third qubit can be delayed until the first qubit is measured.
In general, with more control qubits, introduction of each additional qubit can be delayed until after measurement of the qubit two places down, so only two control qubits need be used at once.

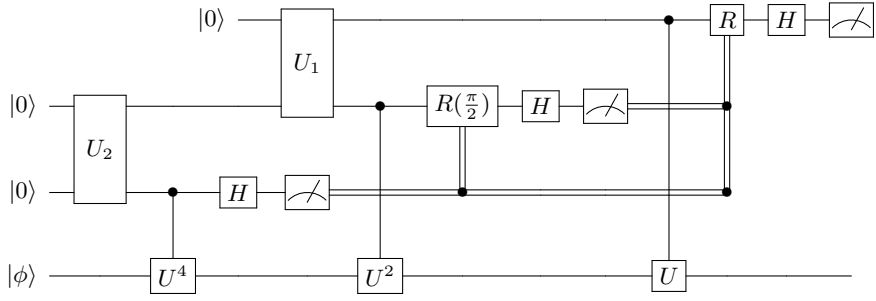
\begin{figure}[tbh]
\centerline{
\Qcircuit @R=2em @C=1em {
& & & \lstick{\ket{0}} & \multigate{1}{U_1} & \qw & \qw & \qw & \qw & \ctrl{3} & \gate{R} &\gate{H} &\meter\\
 \lstick{\ket{0}} & \multigate{1}{U_2} & \qw & \qw & \ghost{U_1} & \ctrl{2} &  \gate{R(\frac{\pi}{2})} &\gate{H} &\meter & \cw & \control \cw \cwx[-1] \\
\lstick{\ket{0}} & \ghost{U_2} & \ctrl{1} &\gate{H} & \meter & \cw & \control \cw \cwx[-1] & \cw & \cw & \cw & \control \cw \cwx[-1] &    \\
\lstick{\ket{\phi}} &  \qw  & \gate{U^4} & \qw & \qw   & \gate{U^2} & \qw & \qw & \qw &\gate{U}&\qw & \qw & \qw 
}}
\caption{\label{fig:twoseqcomb} The combined procedure with preparation of control ancillas in the optimal state delayed until they are needed.}
\end{figure}

The reason why it is possible to prepare the optimal state in this way is that it is a superposition of two unentangled states.
The sine is a combination of a positive and negative complex exponential as
\begin{equation}
    \ket{\psi_{\rm opt}} = \frac 1{2i}\sqrt{\frac{2}{N+2}} \sum_{n=0}^{N} \left[ \exp\left(i\frac{\pi (n+1)}{N +2}\right) - \exp\left(-i\frac{\pi (n+1)}{N +2}\right) \right]\ket{n} .
\end{equation}
When $N=2^m-1$, we can write this as
\begin{align}
\label{eq:opt_state}
 \ket{\psi_{\rm opt}} &= \frac{e^{i\pi/M}}{2i}\sqrt{\frac{2}{M}} \sum_{n_1,\cdots,n_m=0}^{1}e^{i\pi\sum_{j=1}^m n_j2^{j-1}/M}\ket{n_1\cdots n_m} - \frac{e^{-i\pi/M}}{2i}\sqrt{\frac{2}{M}}\sum_{n_1,\cdots,n_m=0}^{1}e^{-i\pi\sum_{j=1}^m n_j2^{j-1}/M}\ket{n_1\cdots n_m}\nn
 &= \frac{e^{i\pi/M}}{2i}\sqrt{\frac{2}{M}} \sum_{n_1,\cdots,n_m=0}^{1} \bigotimes_{j=0}^m e^{i\pi n_j2^{j-1}/N}\ket{n_j} - \frac{e^{-i\pi/N}}{2i}\sqrt{\frac{2}{M}} \sum_{n_1,\cdots,n_m=0}^{1} \bigotimes_{j=0}^m e^{-i\pi n_j2^{j-1}/M}\ket{n_j}\nn
 &= \frac{e^{i\pi/M}}{2i}\sqrt{\frac{2^{m+1}}{M}}  \bigotimes_{j=0}^m \left(\sum_{n_j=0}^{1} \frac{e^{i\pi n_j2^{j-1}/M}}{\sqrt{2}}\ket{n_j}\right) - \frac{e^{-i\pi/M}}{2i}\sqrt{\frac{2^{m+1}}{M}}  \bigotimes_{j=0}^m \left(\sum_{n_j=0}^{1} \frac{e^{-i\pi n_j2^{j-1}/M}}{\sqrt{2}}\ket{n_j}\right) \nn
 &= \sqrt{\frac{2^{m-1}}{M}} \bigotimes_{j=0}^m \left(\sum_{n_j=0}^{1} \frac{e^{-i\pi (-1)^{n_j}2^{j-2}/M}}{\sqrt{2}}\ket{n_j}\right) + \sqrt{\frac{2^{m-1}}{M}}  \bigotimes_{j=0}^m \left(\sum_{n_j=0}^{1} \frac{e^{i\pi (-1)^{n_j}2^{j-2}/M}}{\sqrt{2}}\ket{n_j}\right),
\end{align}
where $M=N+2=2^m+1$.
In the last line we have used
\begin{align}
    \sum_{n_j=0}^{1} \frac{e^{i\pi n_j2^{j-1}/M}}{\sqrt{2}} &= \frac{1}{\sqrt{2}} \left( \ket{0} + e^{i\pi n_j2^{j-1}/M}\right) \nn
    &= e^{i\pi n_j2^{j-2}/M} \frac{1}{\sqrt{2}} \left( e^{-i\pi n_j2^{j-2}/M} \ket{0} + e^{i\pi n_j2^{j-2}/M}\ket{1}\right)
\end{align}
and then
\begin{align}
e^{i\pi/M} \prod_{j=0}^m e^{i\pi n_j2^{j-2}/M} &= e^{i\pi/M} e^{i\pi n_j \sum_{j=0}^m2^{j-2}/M} \nn
&= e^{i\pi/M} e^{i\pi n_j (2^m-1)/2M} \nn
&=  e^{i\pi n_j (2^m+1)/2M} \nn
&=  e^{i\pi n_j /2} \nn &= i.
\end{align}

In order to write the optimal state in a more compact way we define the following
\begin{align}
\label{eq:states+-}
c &\coloneqq \sqrt{\frac{2^{m-1}}{M}}, \nn
\ket{\phi_j^{+}} &\coloneqq \frac{e^{i\pi2^{j-2}/M}\ket{0} + e^{-i\pi2^{j-2}/M}\ket{1}}{\sqrt{2}}, \nn
\ket{\phi_j^{-}} &\coloneqq \frac{e^{-i\pi2^{j-2}/M}\ket{0} + e^{i\pi2^{j-2}/M}\ket{1}}{\sqrt{2}} .
\end{align}
Then the optimal state in \cref{eq:opt_state} can be written as
\begin{equation}
\label{eq:opt_2}
\ket{\psi_{\rm opt}} = c\bigotimes_{j=1}^m \ket{\phi_j^{+}} + c\bigotimes_{j=1}^m \ket{\phi_j^{-}} .
\end{equation}
That is, it is an equally weighted superposition of two states, which are each unentangled between all qubits.
What this means is that any bipartite split of the state will have Schmidt number 2, so the entanglement across the bipartite split can be represented on a single qubit on one side.
We use that principle in the state preparation.
At any stage, after performing the two-qubit operation between qubit $j$ and $j+1$, there will be the correct bipartite entanglement in the split between qubits up to $j$ and qubits from $j+1$ to $m$.
However, at that stage qubits from $1$ to $j-1$ have not been initialised yet, so the entanglement across the bipartite split (for qubits $1$ to $j$) is represented just on qubit $j$.

\section{Recursive construction of the optimum state}\label{sec:rec_meth}

In this section, we show how to create the optimal state \cref{eq:opt_2} recursively.
We introduce the partial tensor product states
\begin{align}
\label{eq:pmphi}
\ket{\phi_{[\ell]}^{+}} &\coloneqq \bigotimes_{j=1}^{\ell}\ket{\phi_j^{+}}, \qquad
\ket{\phi_{[\ell]}^{-}} \coloneqq \bigotimes_{j=1}^{\ell}\ket{\phi_j^{-}}.
\end{align}
Because $\ket{\psi_{\rm opt}}$ is a linear combination of $\ket{\phi_{[m]}^{\pm}}$, the state of qubits 1 to $\ell$ an be represented as a linear combination of $\ket{\phi_{[\ell]}^{\pm}}$.
In order to describe the operations needed to prepare the state $\ket{\psi_{\rm opt}}$, we need to describe the state of qubits 1 to $\ell$ in terms of orthogonal states, which we will denote by $\ket{\Phi_{[\ell]}^{\pm}}$.
These orthogonal (but not normalised) states
\begin{align}
\label{eq:mainStates_opt}
\ket{\Phi_{[\ell]}^+} &\coloneqq \ket{\phi_{[\ell]}^{+}} + \ket{\phi_{[\ell]}^{-}} , \qquad
\ket{\Phi_{[\ell]}^-} \coloneqq \ket{\phi_{[\ell]}^{+}} - \ket{\phi_{[\ell]}^{-}} .
\end{align}
It is possible to prove that these states are orthogonal as in the following Lemma.

\begin{lemma}
The states $\ket{\Phi_{[\ell]}^+}$ and $\ket{\Phi_{[\ell]}^-}$ defined in \cref{eq:mainStates_opt}, are orthogonal:
\begin{equation}
   \langle \Phi_{[\ell]}^- \ket{\Phi_{[\ell]}^+} = 0.
\end{equation}
\end{lemma}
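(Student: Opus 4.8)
The plan is to expand the inner product bilinearly, kill the diagonal terms using normalisation, and reduce the claim to a reality statement about a single cross overlap. First I would use the definitions in \cref{eq:mainStates_opt} to write
\begin{equation}
\langle \Phi_{[\ell]}^- | \Phi_{[\ell]}^+ \rangle = \langle \phi_{[\ell]}^{+} | \phi_{[\ell]}^{+}\rangle - \langle \phi_{[\ell]}^{-} | \phi_{[\ell]}^{-}\rangle + \langle \phi_{[\ell]}^{+} | \phi_{[\ell]}^{-}\rangle - \langle \phi_{[\ell]}^{-} | \phi_{[\ell]}^{+}\rangle .
\end{equation}
Each single-qubit state $\ket{\phi_j^{\pm}}$ in \cref{eq:states+-} is a unit vector, so both tensor-product states $\ket{\phi_{[\ell]}^{\pm}}$ in \cref{eq:pmphi} are normalised; the first two terms are therefore each equal to $1$ and cancel. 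The two remaining terms are complex conjugates of one another by the Hermitian symmetry of the inner product, so what is left is $\langle \phi_{[\ell]}^{+} | \phi_{[\ell]}^{-}\rangle - \langle \phi_{[\ell]}^{-} | \phi_{[\ell]}^{+}\rangle = 2i\,\mathrm{Im}\,\langle \phi_{[\ell]}^{+} | \phi_{[\ell]}^{-}\rangle$. Thus the whole claim reduces to showing that the cross overlap $\langle \phi_{[\ell]}^{+} | \phi_{[\ell]}^{-}\rangle$ is real.

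To establish reality I would factorise the overlap over the qubits, $\langle \phi_{[\ell]}^{+} | \phi_{[\ell]}^{-}\rangle = \prod_{j=1}^{\ell} \langle \phi_j^{+} | \phi_j^{-}\rangle$, and evaluate each factor directly from \cref{eq:states+-}. Setting $\theta_j = \pi 2^{j-2}/M$, a one-line computation gives $\langle \phi_j^{+} | \phi_j^{-}\rangle = \tfrac12\bigl(e^{2i\theta_j} + e^{-2i\theta_j}\bigr) = \cos(2\theta_j)$, which is real. A finite product of real numbers is real, so $\langle \phi_{[\ell]}^{+} | \phi_{[\ell]}^{-}\rangle \in \mathbb{R}$, its imaginary part vanishes, and hence $\langle \Phi_{[\ell]}^- | \Phi_{[\ell]}^+ \rangle = 0$.

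A more transparent route to the same reality fact, which I would mention as a remark, is to observe that $\ket{\phi_j^{-}}$ is obtained from $\ket{\phi_j^{+}}$ by the bit flip exchanging $\ket{0}$ and $\ket{1}$, i.e.\ $\ket{\phi_j^{-}} = X\ket{\phi_j^{+}}$ for the Pauli $X$, and therefore $\ket{\phi_{[\ell]}^{-}} = X^{\otimes \ell}\ket{\phi_{[\ell]}^{+}}$. Then $\langle \phi_{[\ell]}^{+} | \phi_{[\ell]}^{-}\rangle = \langle \phi_{[\ell]}^{+} | X^{\otimes \ell} | \phi_{[\ell]}^{+}\rangle$ is the expectation value of a Hermitian operator and is automatically real, avoiding any trigonometry at all. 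I do not expect a genuine obstacle: the result follows from bilinearity, normalisation, and the single-qubit evaluation. The only points demanding care are the bookkeeping of the complex conjugation when combining the two cross terms, and confirming that each per-qubit overlap is genuinely real rather than merely of fixed modulus — the explicit value $\cos(2\theta_j)$ settles this unambiguously.
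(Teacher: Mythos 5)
Your proposal is correct and follows essentially the same route as the paper's proof: expand $\langle \Phi_{[\ell]}^- | \Phi_{[\ell]}^+ \rangle$ bilinearly, cancel the diagonal terms by normalisation of $\ket{\phi_{[\ell]}^{\pm}}$, and cancel the cross terms because the per-qubit overlap $\langle \phi_j^{+} | \phi_j^{-}\rangle = \cos(\pi 2^{j-1}/M)$ is real, so the factorised overlap is real. Your closing remark that $\ket{\phi_{[\ell]}^{-}} = X^{\otimes \ell}\ket{\phi_{[\ell]}^{+}}$, making the cross overlap an expectation value of a Hermitian operator, is a nice structural alternative to the trigonometric evaluation, but it does not change the substance of the argument.
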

\begin{proof}
From \cref{eq:states+-} we have
\begin{align}
\langle \phi_{j}^- \ket{\phi_{j}^+} &= \cos(\pi2^{j-1}/M) ,
\end{align} 
which is real.
In turn, that implies $\langle \phi_{[\ell]}^- \ket{\phi_{[\ell]}^+}$ is real, and equal to $\langle \phi_{[\ell]}^+ \ket{\phi_{[\ell]}^-}$.
Moreover, because $\ket{\Phi_{j}^\pm}$ are normalised, so are $\ket{\Phi_{[\ell]}^\pm}$.
Therefore we obtain
\begin{align}
\langle \Phi_{[\ell]}^- \ket{\Phi_{[\ell]}^+} &= \langle \phi_{[\ell]}^+ \ket{\phi_{[\ell]}^+}
-\langle \phi_{[\ell]}^- \ket{\phi_{[\ell]}^-}
+\langle \phi_{[\ell]}^+ \ket{\phi_{[\ell]}^-}
-\langle \phi_{[\ell]}^- \ket{\phi_{[\ell]}^+} =0
\end{align}
Here, the first two terms cancel because they are both 1 (due to normalisation) and the second two terms cancel because they are equal (because they are real).
\end{proof}

Next we wish to show that there is a simple recurrence relation for the states $\ket{\Phi_{[\ell]}^\pm}$, in their normalised form 
\begin{equation}\label{eq:mainStates_nrm}
    \ket{\widetilde\Phi_{[\ell]}^\pm} \coloneqq \frac{\ket{\Phi_{[\ell]}^\pm}}{\sqrt{\langle \Phi_{[\ell]}^\pm\ket{\Phi_{[\ell]}^\pm}}}.
\end{equation}
We will use the recurrence relation to derive the sequence of two-qubit operations to prepare the initial state.
The result is as follows.

\begin{lemma}
The states $\ket{\widetilde\Phi_{[\ell]}^+}$ and $\ket{\widetilde\Phi_{[\ell]}^-}$ defined in \cref{eq:mainStates_opt} and \cref{eq:mainStates_nrm}, have recurrence relation
\begin{equation}
    \ket{\widetilde\Phi_{[\ell+1]}^\pm} = \mu_\ell^{(0,\pm)}\ket{\widetilde\Phi_{[\ell]}^\pm} \ket{+} + \mu_\ell^{(1,\pm)}\ket{\widetilde\Phi_{[\ell]}^\mp} \ket{-},
\end{equation}
where
\begin{align}
    \mu_\ell^{(0,\pm)} &=  \cos(2^{\ell-1}\pi/M) P^{0,\pm},\\
    \mu_\ell^{(1,\pm)} &= i\sin(2^{\ell-1}\pi/M) P^{1,\pm},\\
    P^{s,\pm} &= \sqrt{\frac{1 \pm (-1)^s \prod_{j=1}^{\ell} \cos(2^{j-1}\pi/M)}{1 \pm \prod_{j=1}^{\ell+1} \cos(2^{j-1}\pi/M)}}.
\end{align}
\end{lemma}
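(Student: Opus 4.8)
The plan is to establish the recurrence by directly expanding the unnormalised state $\ket{\Phi_{[\ell+1]}^\pm}$ in terms of the level-$\ell$ states and a single extra qubit, and only then dividing through by the appropriate norms to pass to the normalised states $\ket{\widetilde\Phi_{[\ell]}^\pm}$. The two ingredients I need are the norms of the $\ket{\Phi_{[\ell]}^\pm}$ and a decomposition of the new single-qubit factor $\ket{\phi_{\ell+1}^{\pm}}$ in the $\{\ket{+},\ket{-}\}$ basis.

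First I would compute the norms. Using the single-qubit overlap $\langle \phi_j^-\ket{\phi_j^+}=\cos(2^{j-1}\pi/M)$ from the preceding lemma, together with the fact that this overlap is real, the tensor-product overlap factorises as $\langle \phi_{[\ell]}^-\ket{\phi_{[\ell]}^+}=\prod_{j=1}^{\ell}\cos(2^{j-1}\pi/M)$. Writing $C_\ell\coloneqq\prod_{j=1}^{\ell}\cos(2^{j-1}\pi/M)$ and using normalisation of $\ket{\phi_{[\ell]}^{\pm}}$, I get $\langle\Phi_{[\ell]}^\pm\ket{\Phi_{[\ell]}^\pm}=2(1\pm C_\ell)$, so the normalisation constant in \cref{eq:mainStates_nrm} is $\sqrt{2(1\pm C_\ell)}$. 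The same formula at level $\ell+1$ supplies the denominator $\sqrt{2(1\pm C_{\ell+1})}$.

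Next I would decompose the new qubit. Setting $\theta\coloneqq 2^{\ell-1}\pi/M$, which is exactly the angle appearing in $\ket{\phi_{\ell+1}^{\pm}}$, a short rewriting of $e^{\pm i\theta}$ in terms of cosine and sine gives $\ket{\phi_{\ell+1}^{\pm}}=\cos\theta\,\ket{+}\pm i\sin\theta\,\ket{-}$. Substituting this into $\ket{\Phi_{[\ell+1]}^\pm}=\ket{\phi_{[\ell]}^+}\ket{\phi_{\ell+1}^+}\pm\ket{\phi_{[\ell]}^-}\ket{\phi_{\ell+1}^-}$ and collecting the $\ket{+}$ and $\ket{-}$ components, the cross terms organise themselves so that the $\ket{+}$ branch carries $\ket{\phi_{[\ell]}^+}\pm\ket{\phi_{[\ell]}^-}=\ket{\Phi_{[\ell]}^\pm}$ while the $\ket{-}$ branch carries $\ket{\phi_{[\ell]}^+}\mp\ket{\phi_{[\ell]}^-}=\ket{\Phi_{[\ell]}^\mp}$. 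This yields the clean unnormalised identity $\ket{\Phi_{[\ell+1]}^\pm}=\cos\theta\,\ket{\Phi_{[\ell]}^\pm}\ket{+}+i\sin\theta\,\ket{\Phi_{[\ell]}^\mp}\ket{-}$.

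Finally I would normalise. Replacing each unnormalised state by its normalised counterpart and dividing by $\sqrt{2(1\pm C_{\ell+1})}$, the prefactors collapse to $\cos\theta\sqrt{(1\pm C_\ell)/(1\pm C_{\ell+1})}$ and $i\sin\theta\sqrt{(1\mp C_\ell)/(1\pm C_{\ell+1})}$, which are exactly $\mu_\ell^{(0,\pm)}$ and $\mu_\ell^{(1,\pm)}$ once one reads $(-1)^s$ in $P^{s,\pm}$ as selecting the upper sign for $s=0$ and the lower sign for $s=1$. The computation is elementary throughout; the one place demanding care is the sign bookkeeping — the $\ket{-}$ branch involves $\ket{\Phi_{[\ell]}^\mp}$, whose norm is $\sqrt{2(1\mp C_\ell)}$ rather than $\sqrt{2(1\pm C_\ell)}$, so the $C_\ell$ in its numerator carries the opposite sign to the $C_{\ell+1}$ in the shared denominator, which is precisely what the $(-1)^s$ factor encodes.
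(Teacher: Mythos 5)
Your proposal is correct and takes essentially the same approach as the paper: expanding $\ket{\phi_{\ell+1}^{\pm}}=\cos\theta\ket{+}\pm i\sin\theta\ket{-}$ and collecting terms is just the explicit form of the paper's projections $\langle +\ket{\Phi_{[\ell+1]}^{\pm}}$ and $\langle -\ket{\Phi_{[\ell+1]}^{\pm}}$, giving the identical unnormalised recurrence $\ket{\Phi_{[\ell+1]}^\pm} = \cos\theta\,\ket{\Phi_{[\ell]}^\pm}\ket{+} + i\sin\theta\,\ket{\Phi_{[\ell]}^\mp}\ket{-}$. Your normalisation step, using $\langle \phi_{[\ell]}^- \ket{\phi_{[\ell]}^+} = \prod_{j=1}^{\ell}\cos(2^{j-1}\pi/M)$ to get $({\mathcal{N}}_{[\ell]}^{\pm})^2 = 2(1\pm C_\ell)$ and hence the ratios $P^{s,\pm}$, matches the paper's computation exactly, including the sign bookkeeping for the $\ket{\Phi_{[\ell]}^{\mp}}$ branch.
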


\begin{proof}
To prove this, we start by noting that
\begin{align}
    \cos( 2^{j-2}\pi/M) &= \langle + \ket{\phi_j^{\pm}}, \nn
    i\sin( 2^{j-2}\pi/M) &= \langle - \ket{\phi_j^{+}}, \nn
    -i\sin( 2^{j-2}\pi/M) &= \langle - \ket{\phi_j^{-}} .
\end{align}
Therefore we can see that
\begin{align}
\langle +\ket{\Phi_{[\ell+1]}^+} &= \cos( 2^{\ell-1}\pi/M)\ket{\Phi_{[\ell]}^+}, \nn
\langle - \ket{\Phi_{[\ell+1]}^+} &= i\sin( 2^{\ell-1}\pi/M)\ket{\Phi_{[\ell]}^-} ,
\end{align}
which implies
\begin{equation}
    \ket{\Phi_{[\ell+1]}^+} = \cos( 2^{\ell-1}\pi/M)\ket{\Phi_{[\ell]}^+} \ket{+} + i\sin( 2^{\ell-1}\pi/M)\ket{\Phi_{[\ell]}^-} \ket{-} .
\end{equation}
Similarly, we find
\begin{align}
\langle +\ket{\Phi_{[\ell+1]}^-} &= \cos( 2^{\ell-1}\pi/M)\ket{\Phi_{[\ell]}^-}, \nn
\langle - \ket{\Phi_{[\ell+1]}^-} &= i\sin( 2^{\ell-1}\pi/M)\ket{\Phi_{[\ell]}^+} ,
\end{align}
which implies
\begin{equation}
    \ket{\Phi_{[\ell+1]}^-} = \cos( 2^{\ell-1}\pi/M)\ket{\Phi_{[\ell]}^-} \ket{+} + i\sin( 2^{\ell-1}\pi/M)\ket{\Phi_{[\ell]}^+} \ket{-} .
\end{equation}
This gives us recurrence relations for $\ket{\Phi_{[\ell]}^\pm}$, which can be written
\begin{equation}
    \ket{\Phi_{[\ell+1]}^\pm} = \cos( 2^{\ell-1}\pi/M)\ket{\Phi_{[\ell]}^\pm} \ket{+} + i\sin( 2^{\ell-1}\pi/M)\ket{\Phi_{[\ell]}^\mp} \ket{-} .
\end{equation}

Let us define the normalisation
\begin{equation}
    {\mathcal{N}}_{[\ell]}^\pm = \sqrt{\langle \Phi_{[\ell]}^\pm \ket{\Phi_{[\ell]}^\pm}}.
\end{equation}
In terms of this, the recurrence relation for the normalised states is
\begin{equation}
    \ket{\widetilde\Phi_{[\ell+1]}^\pm} = \cos( 2^{\ell-1}\pi/M)
    \frac{{\mathcal{N}}_{[\ell]}^+}{{\mathcal{N}}_{[\ell+1]}^\pm}
    \ket{\Phi_{[\ell]}^\pm}\ket{+}  + i\sin( 2^{\ell-1}\pi/M)
    \frac{{\mathcal{N}}_{[\ell]}^-}{{\mathcal{N}}_{[\ell+1]}^\pm}
    \ket{\widetilde\Phi_{[\ell]}^\mp} \ket{-} .
\end{equation}
The normalisation can be determined using
\begin{equation}\label{eq:recur}
    \langle \phi_{[\ell]}^- \ket{\phi_{[\ell]}^+} = \prod_{j=1}^\ell \cos(\pi2^{j-1}/M),
\end{equation}
which gives
\begin{align}
    ({\mathcal{N}}_{[\ell]}^\pm)^2 = \langle \Phi_{[\ell]}^\pm \ket{\Phi_{[\ell]}^\pm} = 2 \pm 2\prod_{j=1}^\ell \cos(\pi2^{j-1}/M).
\end{align}
That gives us the ratios of norms
\begin{equation}
    \frac{{\mathcal{N}}_{[\ell]}^+}{{\mathcal{N}}_{[\ell+1]}^\pm} = P^{0,\pm}, \qquad \frac{{\mathcal{N}}_{[\ell]}^-}{{\mathcal{N}}_{[\ell+1]}^\pm} = P^{1,\pm}.
\end{equation}
Hence \cref{eq:recur} is the form of the recurrence relation required.
\end{proof}

\section{Preparing optimum state with two-qubit unitaries}\label{sec:twoQ}

In the previous section, we showed that it is possible to construct an orthonormal basis for the state on qubits 1 to $\ell$ as $\ket{\widetilde\Phi_{[\ell]}^\pm}$, which satisfy the recursive relation
\begin{align}\label{eq:recursed}
    \ket{\widetilde\Phi_{[\ell+1]}^+} &= \mu_\ell^{(0,+)}\ket{\widetilde\Phi_{[\ell]}^+} \ket{+} + \mu_\ell^{(1,+)}\ket{\widetilde\Phi_{[\ell]}^-} \ket{-} , \nn
    \ket{\widetilde\Phi_{[\ell+1]}^-} &= \mu_\ell^{(0,-)}\ket{\widetilde\Phi_{[\ell]}^-} \ket{+} + \mu_\ell^{(1,-)}\ket{\widetilde\Phi_{[\ell]}^+} \ket{-} \, ,
\end{align}
for constants $\mu_\ell^{(s,\pm)}$.
Moreover, it is obvious from the definitions that $\ket{\widetilde\Phi_{[m]}^+}=\ket{\psi_{\rm opt}}$.
Therefore, the optimum state may be constructed in a recursive way from $\ket{\widetilde\Phi_{[m-1]}^\pm}$ on qubits 1 to $m-1$, which can be constructed from states $\ket{\widetilde\Phi_{[m-2]}^\pm}$ on qubits 1 to $m-2$, and so forth.

In order to prepare the optimal state, we can apply a stepwise procedure where the principle is to use a single qubit to flag which of $\ket{\Phi_{[\ell]}^\pm}$ is to be prepared on the remaining qubits 1 to $\ell$.
We start from qubits $m$ and $m-1$ and work back to qubits 1 and 2.
It is convenient to describe the operations as acting on qubits initialised as $\ket{+}$.
Then we initially perform an operation on qubits $m$ and $m-1$ that maps
\begin{equation}
U_{m-1} \ket{+} \ket{+} = \mu_{m-1}^{(0,+)} \ket{+} \ket{+} + \mu_{m-1}^{(1,+)} \ket{-} \ket{-}.
\end{equation}
The principle of this operation is that it corresponds to the recursion
\begin{equation}
    \ket{\psi_{\rm opt}} = \ket{\widetilde\Phi_{[m]}^+} = \mu_{m-1}^{(1)}\ket{\widetilde\Phi_{[m-1]}^+} \ket{+} + \mu_{m-1}^{(2)}\ket{\widetilde\Phi_{[m-1]}^-} \ket{-} \, .
\end{equation}
At this stage we only have qubits, so the states $\ket{\widetilde\Phi_{[m-1]}^\pm}$ on qubits 1 to $m-1$ need to be represented by $\ket{\pm}$ on a single qubit.

It is trivial to see that unitary $U_{m-1}$ exists; it can explicitly be performed by rotating qubit $m$ as
\begin{equation}
\ket{+} \mapsto \mu_{m-1}^{(0,+)} \ket{+} + \mu_{m-1}^{(1,+)} \ket{-}.
\end{equation}
We can alternatively describe the operation as having the matrix form in the $\ket{\pm}$ basis
\begin{equation}
    U_{m-1} \equiv \begin{pmatrix}
    \mu_{m-1}^{(0,+)} & * & * & * \\
    0 & * & * & * \\
    0 & * & * & * \\
    \mu_{m-1}^{(1,+)} & * & * & * 
    \end{pmatrix},
\end{equation}
where $*$ indicates entries where the value is unimportant.

We then perform $U_{m-2}$ on qubits $m-1$ and $m-2$, down to $U_1$ on qubits 1 and 2.
The unitary $U_\ell$ needs to map
\begin{align}
U_\ell \ket{+}\ket{+} &= \mu_\ell^{(0,+)} \ket{+} \ket{+} + \mu_\ell^{(1,-)} \ket{-} \ket{-}\nn
U_\ell \ket{+}\ket{-} &= \nu_\ell^{(0,+)} \ket{-} \ket{+} + \nu_\ell^{(1,-)} \ket{+} \ket{-}.
\end{align}
This corresponds to the recursion given in \cref{eq:recursed}, and the states $\ket{\pm}$ on qubit $\ell$ are being used to represent $\ket{\widetilde\Phi_{[\ell]}^\pm}$ on qubits 1 to $\ell$.
This operation has the matrix entries
\begin{equation}
    U_{\ell} \equiv \begin{pmatrix}
    \mu_{\ell}^{(0,+)} & 0 & * & * \\
    0 & \mu_{\ell}^{(1,-)} & * & * \\
    0 & \mu_{\ell}^{(0,-)} & * & * \\
    \mu_{\ell}^{(1,+)} & 0 & * & * 
    \end{pmatrix}.
\end{equation}

This operation may be achieved in the following way. Define the single-qubit rotations $V_\ell^{(0,+)}$ and $V_\ell^{(1)}$ to act as
\begin{align}
V_\ell^{(0,+)}\ket{+} &= \mu_\ell^{(0,+)} \ket{+} + \mu_\ell^{(1,+)} \ket{-} \nn
V_\ell^{(0,-)}\ket{+} &= \mu_\ell^{(0,-)} \ket{-} + \mu_\ell^{(1,-)} \ket{+} \, .
\end{align}
Then $U_\ell$ corresponds to the controlled operation
\begin{equation}
U_{\ell} = V_\ell^{(0,+)} \otimes \ket{+}\bra{+}  + V_\ell^{(0,-)} \otimes \ket{-}\bra{-} .
\end{equation}
This method could be used for $U_{m-1}$, though the method described above is simpler.

After performing this sequence of unitaries, we then need to map $\ket{\pm}$ to $\ket{\widetilde\Phi_{[1]}^\pm}$ on qubit 1.
This is a simple single-qubit unitary operation, which can be combined with $U_1$ to give the correct final state with a sequence of two-qubit unitary operations.
Thus our recursive expression for the states gives us a sequence of two-qubit unitaries to create the optimal state.

To be more specific about what operation is needed,
\begin{equation}
    \ket{\phi_{[1]}^+} = \frac{e^{i\pi/2M}\ket{0}+e^{-i\pi/2M}\ket{1}}{\sqrt{2}} ,
\end{equation}
so that
\begin{align}
    \ket{\Phi_{[1]}^+} &= \sqrt{2} \left( \cos(\pi/2M)\ket{0}+\cos(\pi/2M)\ket{1} \right),\\
    \ket{\Phi_{[1]}^-} &= i\sqrt{2} \left( \sin(\pi/2M)\ket{0}-\sin(\pi/2M)\ket{1} \right).
\end{align}
That gives the normalised states
\begin{align}
    \ket{\widetilde\Phi_{[1]}^+} &= \ket{+} ,\qquad
    \ket{\widetilde\Phi_{[1]}^-} = i\ket{-} .
\end{align}
Therefore the operation needed is an $i$ phase shift on $\ket{-}$.

\section{Conclusion}
\label{sec:conc}

We have shown how to create the optimal state for phase estimation, in the sense of minimising Holevo variance, using a sequence of two qubit operations.
When combining this sequential process with the semiclassical quantum Fourier transform, we can entangle new qubits after measuring control qubits in such a way that only two control qubits are needed at once.
This means that the qubit that is measured can be reset and used as the new qubit to be entangled, minimising the need for ancilla qubits.

In quantum algorithms where phase estimation is needed with a small number of logical qubits this is ideal.
Previously the method used was either many entangled qubits, increasing the size of the quantum computer needed, or a single control qubit, which significantly increases the error.
In our method the number of control qubits is only increased by 1, while giving the minimal error.
Here the quantity being exactly minimised is the Holevo variance, which is very close to the mean-square error (MSE) for sharply peaked distributions.
If one were interested in minimising MSE, then these states give the same leading order term for MSE as the minimum MSE \cite{BerryPRA12}, so these states are still suitable.

Our method of preparing the state, although it has been derived for the specific case of the optimal state for minimising Holevo variance, could also be applied to other states that are a superposition of two unentangled states.
The crucial feature is that the Schmidt number is 2 for any bipartite split across the qubits.
One could also consider states with larger Schmidt number, and use a larger number of qubits as controls.
That could potentially be used for states that are optimal for minimising other measures of error.
For example, one could consider methods of approximating Kaiser windows or the digital prolate spheroidal sequence, as is suitable for optimising confidence intervals \cite{Kaiser}.

Another interesting question is whether this procedure could be demonstrated with photons.
A scheme with optimal phase states for $N=2$ using two photons was demonstrated in \cite{DaryanooshNC18}.
With the preparation scheme we have outlined, it would potentially be possible to demonstrate these states with larger $N$, though it would require entangling operations that might require nonlinear optical elements.

\begin{acknowledgments}
DWB worked on this project under a sponsored research agreement with Google Quantum AI. DWB is also supported by Australian Research Council Discovery Projects DP190102633, DP210101367, and DP220101602.
\end{acknowledgments}

\section*{Author Declarations}
\subsection*{Conflict of interest}
The authors have no conflicts to disclose.

\section*{Data availability}
Data sharing is not applicable to this
article as no new data were created or
analyzed in this study.

\end{document}